\newtheorem{thm}{\bf Theorem}[section]
\newtheorem{prp}[thm]{\bf Proposition}
\newtheorem{rem}[thm]{\bf Remark}
\newtheorem{exmp}[thm]{\bf Example}
\newtheorem{cor}[thm]{\bf Corollary}
\newcommand{\cis}[1][C]{\ensuremath{\mathbb{#1}}}
\newcommand{\av}[1]{\ensuremath{\mathcal{#1}}}
\newcommand{\vek}[1][h]{\ensuremath{\mathbf{#1}}}
\newcommand{\prR}[1]{\ensuremath{\mathbb{P}^{#1}_{\cis[R]}}}
\newcommand{\af}[1]{\ensuremath{\mathbb{A}^{#1}_{\cis}}}
\newcommand{\afR}[1]{\ensuremath{\mathbb{A}^{#1}_{\cis[R]}}}
\newcommand{\euR}[1]{\ensuremath{\mathbb{E}^{#1}_{\cis[R]}}}
\journal{Journal of Computational and Applied Mathematics}
\begin{document}

\sloppy

\begin{frontmatter}

\title{Determining surfaces of revolution from their implicit equations}

\author[plzen1]{Jan Vr\v{s}ek}
\ead{vrsek@ntis.zcu.cz}

\author[plzen2,plzen1]{Miroslav L\'avi\v{c}ka\corref{cor1}}
\cortext[cor1]{Corresponding author}
\ead{lavicka@kma.zcu.cz}

\address[plzen1]{NTIS -- New Technologies for the Information Society, Faculty of Applied Sciences, University of West Bohemia,
         Univerzitn\'i 8, 301 00 Plze\v{n}, Czech~Republic}

\address[plzen2]{Department of Mathematics, Faculty of Applied Sciences, University of West Bohemia,
         Univerzitn\'i~8,~301~00~Plze\v{n},~Czech~Republic}

\begin{abstract}
Results of number of  geometric operations (often used in technical practise,  as e.g. the operation of blending) are in many cases surfaces described implicitly. Then it is a challenging task to recognize the type of the obtained surface, find its characteristics and for the rational surfaces compute also their parameterizations. In~this contribution we will focus on surfaces of revolution. These objects, widely used in geometric modelling, are generated by rotating a generatrix around a given axis. If the generatrix is an algebraic curve then so is also the resulting surface, described uniquely by a polynomial which can be found by some well-established implicitation technique. However, starting from a polynomial it is not known how to decide if the corresponding algebraic surface is rotational or not. Motivated by this, our goal is to formulate a simple and efficient algorithm whose input is a~polynomial with the coefficients from some subfield of $\cis[R]$ and the output is the answer whether the shape is a surface of revolution. In the affirmative case we also find the equations of its axis and generatrix.
Furthermore, we investigate the problem of rationality and unirationality of surfaces of revolution and show  that this question can be efficiently answered discussing
the rationality of a certain associated planar curve.
\end{abstract}

\begin{keyword}
Surfaces of revolution \sep algebraic surfaces \sep surface recognition \sep rational surfaces
\end{keyword}

\end{frontmatter}

\section{Introduction and related work}\label{Intro}

The choice of a suitable description of a given shape (parametric, or implicit) is a fundamental thing for designing and studying efficient subsequent geometric algorithms in many technical applications. Parameterizations, most often used in Computer-Aided (Geometric) Design, allow us to generate points on curves and surfaces, they are also very suitable for plotting, computing transformations, computing curvatures e.g. for shading and colouring etc. On the other hand implicit representations are especially suitable for deciding whether a given point is lying on the object, or outside. In addition, it is convenient to intersect two shapes when one is given parametrically and the other implicitly. Finally, in computer graphics, ray tracing is efficiently used for generating an image of implicit algebraic surfaces.

However, we must recall that not every algebraic curve or surface admits a rational parameterization. To be more exact, let $\av{X}$ be a variety  over a field $\mathbb{K}$. Then $\av{X}$ is said to be {\em unirational} if it admits a rational parameterization. Furthermore, if there exists a proper
parameterization (i.e., a parameterization with the rational inverse) then $\av{X}$ is called {\em rational}.
By the theorem of L\"uroth, a curve has a parameterization if and only if it has a proper parameterization if and only if its genus (see \cite{Wa50} for a definition of this notion) vanishes. Hence, for planar curves the notions of rationality and unirationality are equivalent for any field. Algorithmically, the parameterization problem is well-solved, see e.g. \cite{vH97,SeWi91,Wa50}.  In the surface case the theory differs. Over algebraically closed field with characteristic zero,
by the~Castelnuovo's theorem, surface is unirational if and only if it is rational if and only if the arithmetical genus $p_a$ and the second plurigenus $P_2$ are both zero (see \cite{Ha77} for a definition of these notions).  The problem is algorithmically much more difficult than for curves -- see e.g. \cite{Sc98} for further details.

The reverse problem (consider a rational parametric description of a curve or a surface, find the corresponding implicit equation) is called the implicitization problem. For any rational parametric curve or surface, we can always convert it into implicit form. Nonetheless, the implicitization always involves relatively complicated process and the resulting implicit form might have large number of coefficients -- so, it is not a simple task in general. One can find many generic methods for implicitizing arbitrary rational curves and surfaces such as resultants, Gr\"{o}bner bases, moving curves and surfaces, and $\mu$-bases -- see e.g. \cite{GVNePDSeSe04,Ko04,SeChe95,SeGoDu97}.

In what follows we will deal with implicit surfaces of revolution which are created by rotating a curve around a straight line. Revolution surfaces are well known since ancient times and very common objects in geometric modelling, as they can be found everywhere in nature, in human artifacts, in technical practise and also in mathematics. There has been a thorough previous investigation  on finding the implicit equation of a rational surface of revolution. In \cite{ShJu05}, the authors created an implicit representation for surfaces of revolution by eliminating the square root from $f(\sqrt{x^2 + y^2}, z)$, where $f(x,z) = 0$ is the implicit equation of the generatrix curve.  Another approach to implicitizing rational surfaces of revolution was presented in  \cite{Chi09} where the method of moving planes was efficiently used -- the implicit equation of the surface of revolution is then given by the determinant of the matrix whose entries are the $2n$ moving planes that follow the surface, each derived from a distinct $3\times 3$ determinant. A recent technique for implicitizing rational surfaces of revolution was presented in \cite{ShGo12}. In this paper, the  $\mu$-bases for all the moving planes that follow the surface of revolution were found and subsequently the resultants were used to construct the implicit equation.

In this paper, we will investigate a different challenging problem of computational geometry originated in technical practise. We start with an~implicit representation and our goal is to decide if the corresponding algebraic surface is rotational or not. Moreover, in case of the positive answer we also want to compute the equations of the axis and the  generatrix of the rotational surface. We would like to stress out that this study reflects the need of the real-world applications as the results of many geometric operations are often described only implicitly. Then it is a challenging task to recognize the type of the obtained surface, find its characteristics and for the rational surfaces compute also their parameterizations. Let us recall e.g. the implicit blend surfaces (often of the canal/pipe/rotational-surface type) offering a good flexibility for designing blends as their shape is not restricted to be constructed as an embedding of a parameter domain. Important contributions for blending by implicitly given surfaces can be found in \cite{HoHo85,Ro89}; several methods for constructing implicit blends were thoroughly investigated in \cite{Ha90,Ha01}. Obviously, for choosing a suitable consequent geometric technique is necessary to decide the exact type of the constructed surface. So, the main contribution of this paper is answering the question for the surfaces of revolution which is mentioned in \cite{AnReSeTaVi14} as still unsolved. In addition we will also focus on the question of rationality and unirationality of surfaces of revolution a show  that this problem can be efficiently solved transforming it to the question of rationality of a planar curve.

The rest of  the paper is organized as follows. In Section~\ref{sec recognition} we consider an algebraic surface given by equation $f(x,y,z)=0$ for an irreducible polynomial defined over some  subfield $\cis[K]$ of $\cis[R]$, typically $\cis[Q]$ or its algebraic extensions. The goal is to decide whether the surface is rotational and eventually to find its axis and profile curve. In this part a symbolic algorithm  for recognition of surfaces of revolution is designed and thoroughly discussed. Section~\ref{sec rationality} deals with the relation between the profile curve (and its quadrat) and the (uni)rationality of the associated surface of revolution. Properties of tubular surfaces are exploited to formulate the results about rationality of surfaces of revolution. Finally we conclude the paper in Section~\ref{Concl}. The theory is documented in detail on two computed examples presented in Appendix.

\section{Implicit surfaces of revolution and their recognition}\label{sec recognition}

Let be given a straight line $\av{A}$ in Euclidean space $\euR{3}$ and let  $\av{G}\subset\euR{3}$ be an  algebraic space curve distinct from $\av{A}$. We assume that $\av{G}$ is not a~line perpendicular to $\av{A}$. Then the object $\av{X}$ created by rotating $\av{G}$ around  $\av{A}$ is an~algebraic surface  which is called a~\emph{surface of revolution} (in what follows, we will write shortly SOR) with the \emph{axis} $\av{A}$ and the \emph{generatrix} $\av{G}$, see Fig.~\ref{fig SOR}~(Left). Assume $\av{X}$ is given by the equation $f(x,y,z)=0$ where $f\in\cis[K][x,y,z]$ for a field $\cis[K]$. In addition we consider that $\av{X}$ is absolutely irreducible (i.e., $f\not= f_1\cdot f_2$ for $f_1,f_2\in\cis[C][x,y,z]$).

Of course, there exist a lot of generating curves $\av{G}$ leading to the same surface. Among them we can find one with a~prominent role -- the~\emph{profile curve} $\av{P}$, i.e., the intersection of $\av{X}$ with a~plane containing the axis, see Fig.~\ref{fig SOR}~(Right).
\begin{figure}[t]
\begin{center}
   \hspace*{6ex}\raisebox{5ex}{\includegraphics[width=0.3\textwidth]{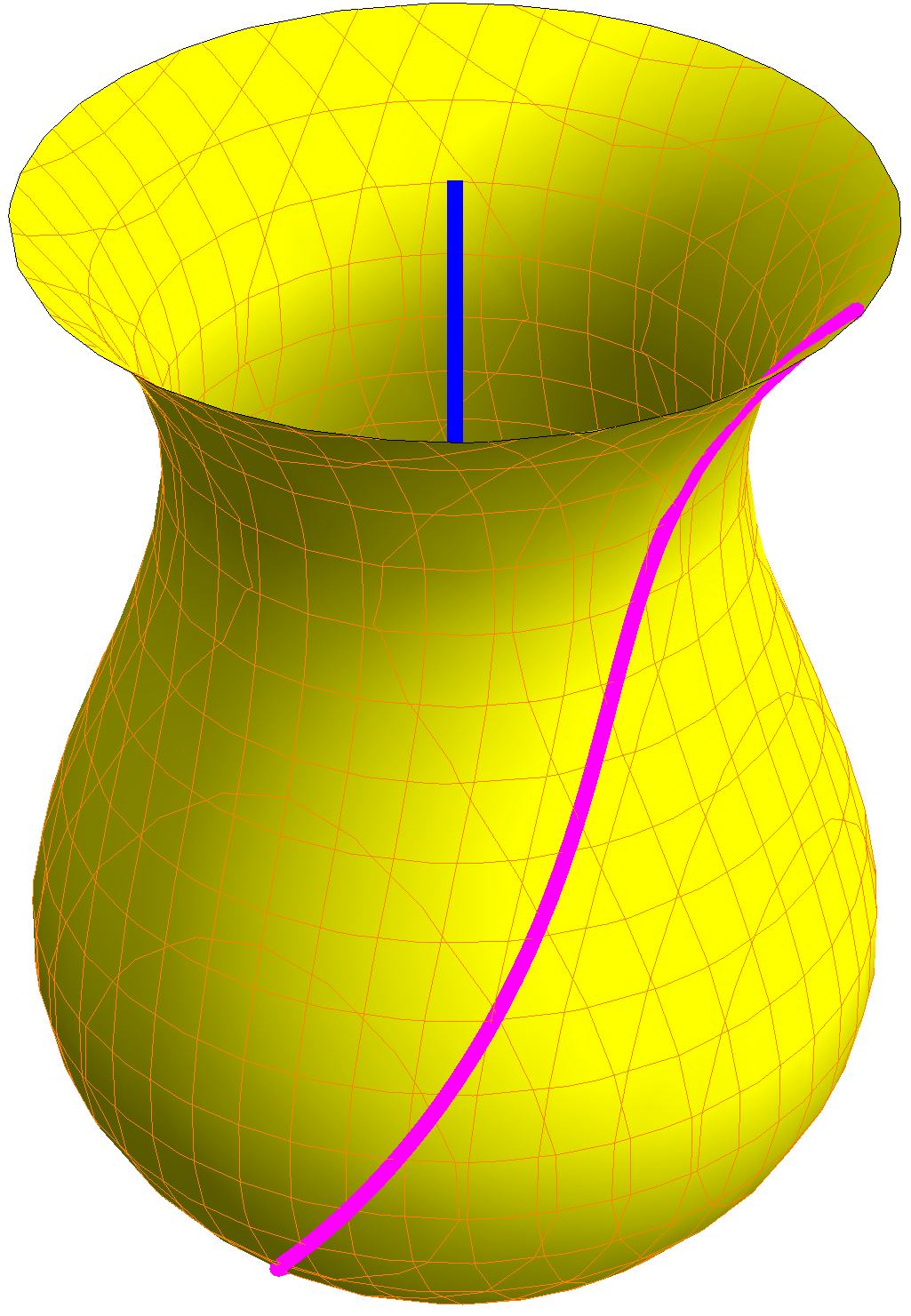}}\hfill
   \includegraphics[width=0.4\textwidth]{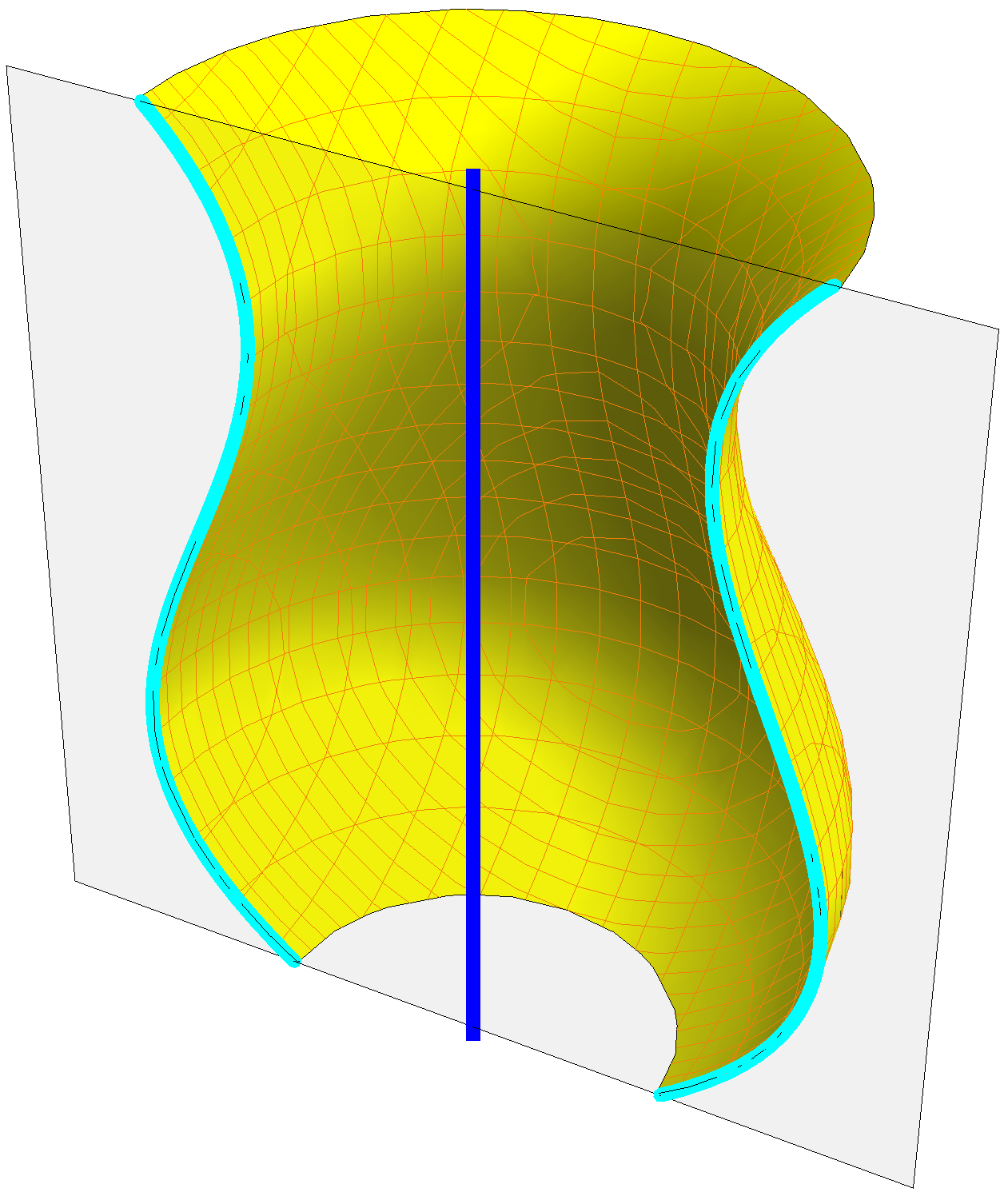}\hspace*{6ex}

\begin{minipage}{0.9\textwidth}
\caption{Left: A surface of revolution (yellow), its axis (blue) and the generatrix (magenta).
Right: A surface of revolution (yellow) cut by a plane (grey) containing the axis (blue) and the profile curve (cyan).
\label{fig SOR}}
\end{minipage}
\end{center}
\end{figure}

In this paper, we want to solve the problem of determining surfaces of revolution from their implicit equations.
Our goal is to formulate a simple and efficient symbolic algorithm whose input will be a~polynomial with the coefficients from a field and the output will be the decision whether the described algebraic surface is SOR or not. We start with a considerably simpler situation -- in particular, we assume that  $\av{X}$ is SOR whose axis coincides with the coordinate $x$-axis. Thus we may obtain its profile curve $\av{P}$ by intersecting $\av{X}$ for instance with the plane $z=0$. Hence, we can consider $\av{P}$ as a curve in $xy$-plane.  Obviously, $\av{P}$ is symmetric with respect to the $x$-axis. Since $(x,y)\in\av{P}$ if and only if $(x,-y)\in\av{P}$ we deduce that its equation $p(x,y)=f(x,y,0)=0$ can be written in one of the following two forms
\begin{equation}
  \sum_i p_i(x)y^{2i} = 0,\qquad\mathrm{or}\qquad \sum_i p_i(x)y^{2i+1}=0.
\end{equation}
Nevertheless the second polynomial can be factorized as $y(\sum_i p_i(x)y^{2i})=0$ which implies that SOR contains a degenerated component $y^2+z^2=0$. This is a contradiction with the~assumed absolute irreducibility of $f(x,y,z)$.  So in what follows we will work with the defining polynomial of $\av{P}$ only in the form
\begin{equation}\label{eq profile}
  p(x,y) = f(x,y,0)=\sum_i p_i(x)y^{2i},
\end{equation}
i.e., $p(x,y)$ contains $y$ solely in even powers. 

Nonetheless, despite $\av{X}$ being irreducible the profile curve may still be either irreducible, or it may decompose into two components. As an~example we can take the hyperboloid of revolution whose profile curve is the irreducible hyperbola. On the other hand the profile curve of the cone of revolution consists of the two intersecting lines (one is the reflected image of the other along the $x$-axis). And this is a~general rule -- if the profile curve of some irreducible SOR is reducible, i.e., $\av{P=P^+\cup P^-}$, then $\av{P}^+$ and $\av{P}^-$ are symmetrically conjugated with respect to the $x$-axis. Moreover, the defining polynomial of $\av{P}$ has then the form $p(x,y)=g(x,y)\cdot g(x,-y)$.

Conversely, starting with the~profile curve \eqref{eq profile}  the corresponding SOR consists of the~points $(x,y,z)$ such that $\left(x,\sqrt{y^2+z^2}\right)\in\av{P}$. Hence its defining polynomial can be written as $\sum_i p_i(x)(y^2+z^2)^i$. Let us summarize this observation to the following proposition:

\begin{prp}\label{prp equation of sor}
 $\av{X}$ is a surface of revolution with the axis $x$ and the profile curve $p(x,y)=0$ if and only if $f(x,y,z)=p(x,y^2+z^2)$.
\end{prp}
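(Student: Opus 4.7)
The plan is to prove both implications by exploiting the rotational invariance encoded by the expression $y^2+z^2$, together with the irreducibility of $f$.

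For the implication $(\Leftarrow)$, I would observe that if $f(x,y,z)$ equals $\sum_i p_i(x)(y^2+z^2)^i$, which is what the notation $p(x,y^2+z^2)$ stands for (well-defined because $p$ contains $y$ only in even powers), then $f$ depends on $(y,z)$ solely through $y^2+z^2$ and is therefore manifestly invariant under every rotation about the $x$-axis. The zero set of $f$ is thus an SOR with axis $\av{A}$ the $x$-axis. Setting $z=0$ recovers $\sum_i p_i(x)y^{2i}=p(x,y)$, so the profile curve is exactly $\av{P}$. This direction is immediate modulo fixing the notation.

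For the implication $(\Rightarrow)$, the strategy is to show that the irreducible polynomial $f$ is itself rotationally invariant, and then to invoke the classical fact that the ring of $SO(2)$-invariants in $\cis[C][y,z]$ is $\cis[C][y^2+z^2]$. First I would note that for any rotation $R_\theta$ about the $x$-axis, the pullback $R_\theta^{*}f$ vanishes on $\av{X}$ and has the same degree as $f$; by absolute irreducibility of $f$, this forces $R_\theta^{*}f=\lambda(\theta)f$ for some scalar $\lambda(\theta)$. Next I would verify that $\lambda(\theta)\equiv 1$: the map $\theta\mapsto\lambda(\theta)$ is a character of the compact group of real rotations, and since $f$ has real coefficients (so $\bar f=f$) and $R_\theta$ is a real rotation, comparing $R_\theta^{*}\bar f$ and $\overline{R_\theta^{*}f}$ forces $\lambda(\theta)\in\mathbb{R}_{>0}$; combined with the character property one concludes $\lambda\equiv 1$. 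Hence $f$ is literally rotation invariant.

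Once $f$ is genuinely invariant, the invariant-theoretic fact applied in $\cis[C][x][y,z]$ gives $f(x,y,z)=g(x,y^2+z^2)$ for some polynomial $g$. Setting $z=0$ yields $g(x,y^2)=f(x,y,0)=p(x,y)=\sum_i p_i(x)y^{2i}$, so $g(x,u)=\sum_i p_i(x)u^i$ and $f(x,y,z)=\sum_i p_i(x)(y^2+z^2)^i$, which is the desired identity $f(x,y,z)=p(x,y^2+z^2)$.

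The main obstacle is the middle step, namely pinning down that the character $\lambda$ is trivial rather than merely some scalar. If one wishes to avoid the compact-group argument, an alternative is to bypass invariance and instead work directly: set $F(x,y,z):=\sum_i p_i(x)(y^2+z^2)^i$ and use that every point $(x_0,y_0,z_0)\in\av{X}$ rotates onto $(x_0,\sqrt{y_0^2+z_0^2},0)\in\av{P}$, so $F$ vanishes on $\av{X}$; irreducibility of $f$ then gives $f\mid F$, and a degree comparison $\deg f\ge\deg f(x,y,0)=\deg p=\deg F$ forces $F=cf$, with $c=1$ obtained by restricting to $z=0$. Either route completes the proof.
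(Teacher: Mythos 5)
Your proof is correct. Note first that the paper does not really prove this proposition at all: it is stated as a summary of the informal observation that $(x,y,z)$ lies on the surface iff $\bigl(x,\sqrt{y^2+z^2}\bigr)\in\av{P}$, whence the defining polynomial ``can be written as'' $\sum_i p_i(x)(y^2+z^2)^i$. Your fallback route --- setting $F=\sum_i p_i(x)(y^2+z^2)^i$, noting that $F$ vanishes on the Zariski-dense union of latitude circles, and deducing $F=cf$ from absolute irreducibility together with the degree bound $\deg F=\deg p=\deg f(x,y,0)\le\deg f$, with $c=1$ upon restricting to $z=0$ --- is precisely the paper's observation made rigorous, and is the tighter of your two arguments (you should add the remark that $f(x,y,0)\not\equiv 0$, since otherwise $z\mid f$ contradicts irreducibility, so the normalization $c=1$ is legitimate). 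Your primary route is genuinely different from anything in the paper: passing through rotation invariance of $f$ itself and the invariant ring $\cis[C][y,z]^{\mathrm{SO}(2)}=\cis[C][y^2+z^2]$ gives a conceptual explanation of why the defining polynomial, not merely the zero set, must factor through $y^2+z^2$, at the price of the character argument for $\lambda\equiv1$ (which is fine: $\lambda$ is a continuous real-valued character of a compact connected group, hence trivial). The only other caveat, which you already flag, is that $p(x,y^2+z^2)$ must be read as substituting $y^2\mapsto y^2+z^2$ in the even-power form \eqref{eq profile}; read literally as a substitution into the second argument it would give $(y^2+z^2)^{2i}$, so this is an abuse of notation inherited from the paper, not an error of yours.
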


\smallskip
If the axis $\av{A}$ is in a general position we can find a suitable isometry $\phi$ which maps it to the coordinate axis $x$ and then we test the transformed surface on the SOR property. Therefore the recognition whether $\av{X}$ is SOR can be reduced only to finding the axis. The identification of the axis will be based on the following well known property of the normal lines to SOR.

\begin{prp}\label{prp known property}
Let $\av{X}$ be a surface of revolution with the axis $\av{A}$. Then the normal line through its non-singular point intersects $\av{A}$ or it is parallel to $\av{A}$.
\end{prp}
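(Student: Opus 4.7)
The plan is to reduce to the normal form of the previous proposition and then carry out a direct gradient computation. Since the statement is purely Euclidean --- ``the normal line intersects $\av{A}$ or is parallel to $\av{A}$'' is invariant under isometries --- I may choose an isometry $\phi$ sending $\av{A}$ to the coordinate $x$-axis and work with the transformed surface. Thus it suffices to prove the claim for a surface of revolution whose axis is the $x$-axis.

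In that normal form, Proposition \ref{prp equation of sor} gives $f(x,y,z)=p(x,y^{2}+z^{2})$ for some bivariate polynomial $p(x,u)$. Writing $p_{x}$ and $p_{u}$ for its partial derivatives evaluated at $(x_{0},y_{0}^{2}+z_{0}^{2})$, the chain rule yields
\begin{equation*}
  \nabla f(x_{0},y_{0},z_{0}) \;=\; \bigl(p_{x},\,2y_{0}p_{u},\,2z_{0}p_{u}\bigr).
\end{equation*}
The normal line at the non-singular point $P_{0}=(x_{0},y_{0},z_{0})$ is therefore $\{P_{0}+t\nabla f \,:\, t\in\mathbb{R}\}$, and I intend to analyse its relation to the $x$-axis directly from this formula.

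The case analysis splits on the vanishing of $p_{u}$. If $p_{u}=0$ at the point, then $\nabla f=(p_{x},0,0)$ (and $p_{x}\neq 0$ by the non-singularity hypothesis), so the normal line is parallel to the $x$-axis. If $p_{u}\neq 0$, I plan to solve the pair of equations $y_{0}+2ty_{0}p_{u}=0$ and $z_{0}+2tz_{0}p_{u}=0$; both are satisfied by the single value $t=-1/(2p_{u})$, which produces the intersection point $\bigl(x_{0}-p_{x}/(2p_{u}),0,0\bigr)$ on $\av{A}$. The degenerate sub-case $y_{0}=z_{0}=0$ (the point lying on the axis itself) is automatic, since then the normal direction already collapses to $(p_{x},0,0)$.

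I do not foresee a genuine obstacle: the argument is essentially a one-line chain-rule computation once Proposition \ref{prp equation of sor} is invoked. The only subtlety worth mentioning is the conceptual reason behind the calculation, namely that the reflection in the plane spanned by $\av{A}$ and $P_{0}$ is a symmetry of $\av{X}$ fixing $P_{0}$, hence preserves the tangent plane and the normal line; this symmetry forces the normal to lie in that axial plane, and two coplanar lines must either meet or be parallel. If desired, this symmetry viewpoint could replace the explicit computation, but the gradient argument is shorter and avoids having to exclude the a priori possibility that the normal is orthogonal to the axial plane (which the formula for $\nabla f$ rules out automatically).
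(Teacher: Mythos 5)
Your proof is correct. Note that the paper itself offers no proof of this proposition --- it is stated as a ``well known property'' of normals to surfaces of revolution --- so there is nothing to compare against; your argument supplies the missing justification. The reduction to the case $\av{A}=\langle x\rangle$ is legitimate (incidence, parallelism and normal lines are all preserved by isometries), the chain-rule formula $\nabla f=(p_x,\,2y_0p_u,\,2z_0p_u)$ is right, and the case split on $p_u$ covers everything: $t=-1/(2p_u)$ kills both the $y$- and $z$-components simultaneously precisely because they are proportional to $y_0$ and $z_0$ with the same factor $1+2tp_u$, while $p_u=0$ forces $p_x\neq0$ by non-singularity and hence parallelism. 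One small point of care you handled correctly: Proposition~\ref{prp equation of sor} writes $f(x,y,z)=p(x,y^2+z^2)$ with $p(x,y)=\sum_i p_i(x)y^{2i}$, which read literally is inconsistent; your reinterpretation as $f(x,y,z)=p(x,u)|_{u=y^2+z^2}$ with $p(x,u)=\sum_i p_i(x)u^i$ is the intended one and is what makes the chain rule apply cleanly. Your closing remark is also apt: the reflection-symmetry argument alone only shows the normal line is invariant under reflection in the axial plane through the point, which leaves open the possibility that it is perpendicular to that plane; the explicit gradient shows the normal actually lies in the span of $(1,0,0)$ and $(0,y_0,z_0)$, closing that loophole.
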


\begin{algorithm}[t]
\caption{Recognition of SOR I}\label{alg recognition I} \algsetup{indent=2em}
\begin{algorithmic}[1]
 \REQUIRE $\av{X}:\ f(x,y,z)$
 \ENSURE \av{X} is SOR with the axis {\av{A}}/\av{X} is not SOR.

 \STATE Find sufficiently enough points $\{\vek[p]_1,\dots,\vek[p]_n\}$ points on $\av{X}$.
 \STATE Compute the normals $\{N_{\vek[p]_1}\av{X},\dots,N_{\vek[p]_n}\av{X}\}$ of $\av{X}$ at $\vek[p]_i$.
 \IF{$\exists!$ a straight line $\av{A}$ s.t. $\forall i:\ \av{A}\cap N_{\vek[p]_i}\av{X}\not=\emptyset$}
   \STATE Find the isometry $\phi$ s.t.  $\phi(\av{A})= \langle x\rangle$.
   \STATE $\av{X}' = \phi(\av{X})$
   \IF{$\av{X}'$ is SOR with the $x$-axis}
     \RETURN \av{X} is SOR with the axis {\av{A}}.
   \ELSE
     \RETURN \av{X} is not SOR.
   \ENDIF
 \ELSE
   \RETURN \av{X} is not SOR.
 \ENDIF
\end{algorithmic}
\end{algorithm}

Following the discussion given above, we formulate the first naive algorithm, see Algorithm~\ref{alg recognition I}.
However, this~algorithm has some~serious gaps. In particular we do not know in step 2  how many points are enough and how to find them. Later, using Pl\"{u}cker coordinates, we will see that 5 points in general position are sufficient. Nonetheless, the next drawback is even more serious. Note that finding points on the implicit surface leads to solving polynomial equations, which is computationally hard task for surfaces of higher degree.  In addition, we emphasize that the purpose of the algorithm (i.e., its symbolic character) does not allow us to use numerical approximations only -- see for instance the following example where numerical computations lead to a wrong conclusion.

\begin{exmp}\rm\label{exmp numeric}
  Let $\av{X}:\ y^2-2xz=0$ be a~cone of revolution with the~axis $(t,0,t)$. We assume that due to computer computations the~axis is obtained not exactly but with some perturbed float coefficients, e.g. $(t,0,1.0000001t)$. Then the~transformed surface $\av{X}'$  possesses the~equation $f(x,y,z)=y^2+z^2-0.0000002xz -x^2=0$. Hence the~profile curve $p(x,y)=0$ should be given by $p(x,y)=y^2-x^2=0$. Since it contains $y$ in even powers only, it is a~profile curve of some SOR. However, $p(x,\sqrt{y^2+z^2})=y^2+z^2-x^2\not\sim f(x,y,z)$ and thus $\av{X}'$ is not SOR with the~computed axis and the~algorithm fails.
\end{exmp}

\medskip
Before we formulate a new version of the recognition algorithm we will study an interesting property of surfaces of revolution which will help us to avoid numerical computations needed in   Algorithm~\ref{alg recognition I}.

Consider a polynomial $f\in\cis[R][x,y,z]=\cis[R][\vek[x]]$ and let $\av{X}:f(\vek[x])=0$  be an algebraic surface in Euclidean space $\euR{3}$. Then by $\av{X}_\alpha$ ($\alpha\in\cis[R]$) we will denote a surface with the defining equation $f(\vek[x])=\alpha$. The 1--parametric family of such surfaces is denoted by $\Sigma_f$. Since the value $f(\vek[p])$ is well defined for each point $\vek[p]\in\euR{3}$ we can see that $\vek[p]\in\av{X}_{f(\vek[p])}\in\Sigma_f$. Hence through each point of $\euR{3}$ passes exactly one surface from  the family $\Sigma_f$. This is a distinguished property of SORs that will play a crucial role in our recognition algorithm.
\begin{figure}[t]
\begin{center}
   \hspace*{6ex}\raisebox{0ex}{\includegraphics[width=0.35\textwidth]{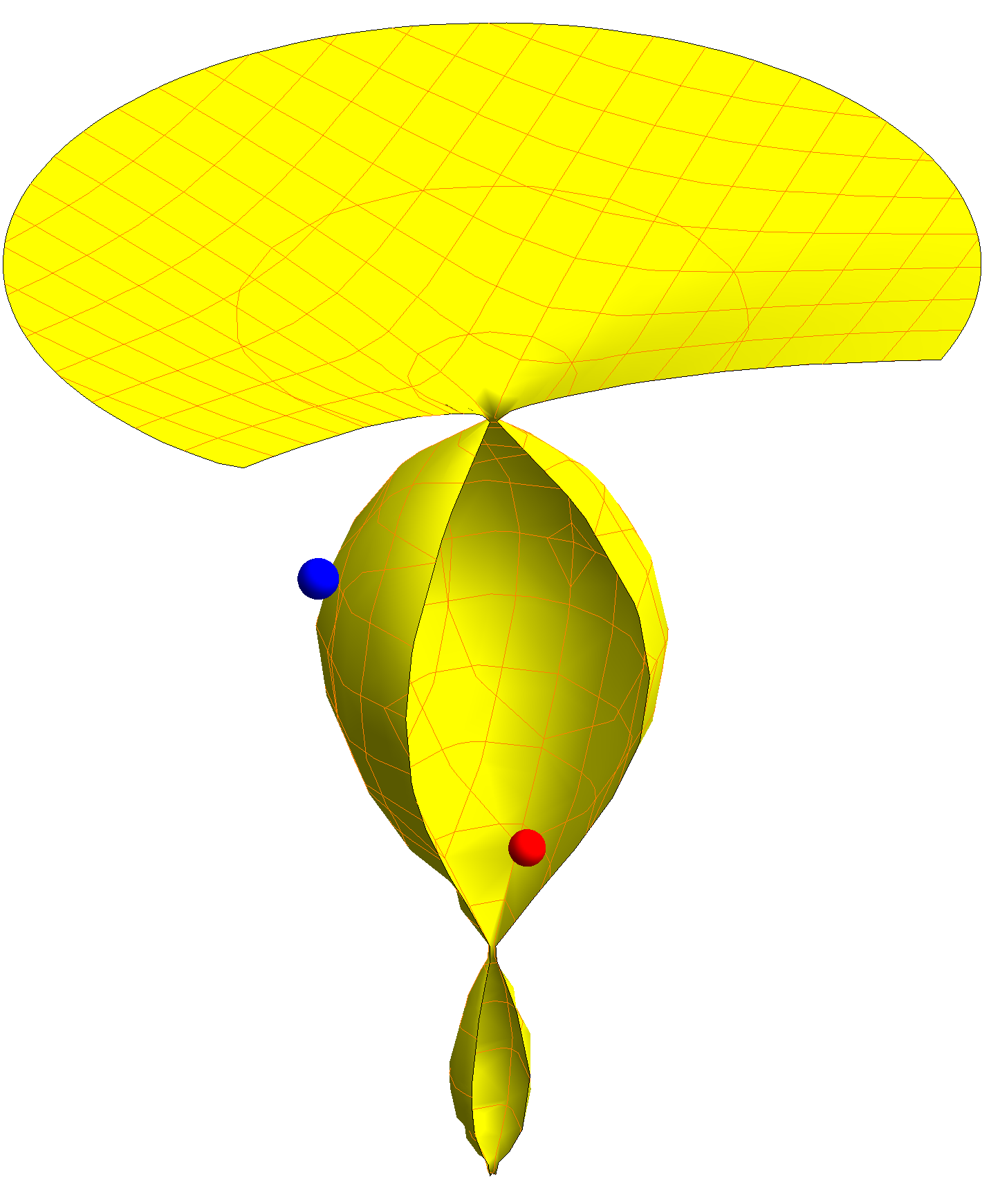}}\hfill
   \includegraphics[width=0.4\textwidth]{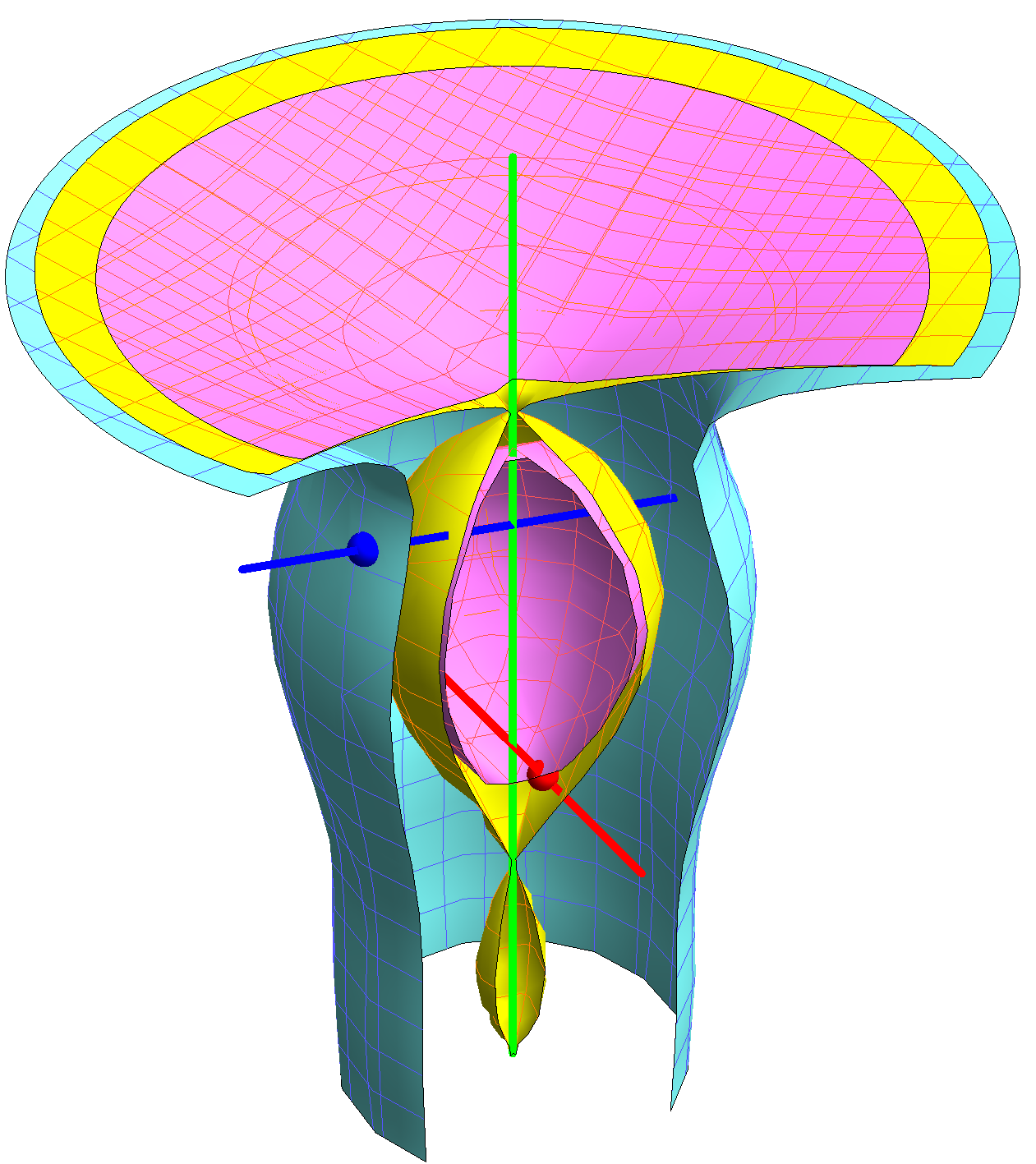}\hspace*{6ex}

\begin{minipage}{0.9\textwidth}
\caption{Left: A surface of revolution $f(x,y,z)=0$ (yellow) and two points (red, blue) arbitrarily chosen in space. Right: Two surfaces of revolution (magenta, cyan) from the family $\Sigma_f$ uniquely determined by the two chosen points  and the common normals (red, blue) of all surfaces intersecting the axis of rotation (green).
\label{fig onion}}
\end{minipage}
\end{center}
\end{figure}

\begin{thm}\label{thm sor are onions}
  Let $\av{X}$ be a surface of revolution with the axis $\av{A}$. Then for any $\alpha\in\cis[R]$ the surface $\av{X}_\alpha$ is also a surface of revolution with the axis $\av{A}$.
\end{thm}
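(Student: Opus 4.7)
The plan is to reduce to the case where the axis $\av{A}$ coincides with the coordinate $x$-axis via an isometry $\phi$, as already used between Propositions~\ref{prp equation of sor} and~\ref{prp known property}, and then to invoke Proposition~\ref{prp equation of sor} directly. After replacing $\av{X}$ by $\phi(\av{X})$ and $f$ by $\tilde{f} := f \circ \phi^{-1}$, I may assume that the axis of rotation is the $x$-axis. The forward direction of Proposition~\ref{prp equation of sor} then yields a polynomial $p(x,y)$ containing $y$ only in even powers, such that $\tilde{f}(x,y,z) = p(x,\, y^2+z^2)$.

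The main step is an elementary shift: the image surface $\phi(\av{X}_\alpha)$ has defining equation
\[
  \tilde{f}(x,y,z) - \alpha \;=\; p(x,\, y^2+z^2) - \alpha \;=\; q(x,\, y^2+z^2),
\]
where I set $q(x,y) := p(x,y) - \alpha$. Since $q$ differs from $p$ only in the term of degree zero in $y$, it still contains $y$ exclusively in even powers, hence is of the form \eqref{eq profile} and defines a planar curve symmetric about the $x$-axis. The converse direction of Proposition~\ref{prp equation of sor} now identifies $\phi(\av{X}_\alpha)$ as a surface of revolution with the $x$-axis as axis, and pulling back by $\phi^{-1}$ shows $\av{X}_\alpha$ to be a surface of revolution with the original axis $\av{A}$.

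The main point requiring some care, which I expect to be the only real obstacle, is that Proposition~\ref{prp equation of sor} was derived under the standing hypothesis that the defining polynomial is absolutely irreducible, and $\tilde{f}-\alpha$ may fail this hypothesis for certain exceptional values of $\alpha$ (for instance, $\av{X}_\alpha$ might degenerate onto the axis itself, or decompose into several rotationally symmetric components). The converse direction of the proposition that I invoke, however, rests only on the observation that any polynomial of the form $q(x,\, y^2+z^2)$ is invariant under all rotations around the $x$-axis, so its zero set is automatically a union of circles centred on $\av{A}$ together with possibly axis points -- that is, genuinely a surface of revolution in the broad sense. Irreducibility is therefore not needed for this direction, and the conclusion holds for every $\alpha \in \cis[R]$.
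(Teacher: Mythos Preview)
Your proof is correct and follows essentially the same approach as the paper: reduce via an isometry to the case where the axis is the $x$-axis, invoke Proposition~\ref{prp equation of sor} to write the defining polynomial as $p(x,y^2+z^2)$, and observe that subtracting the constant $\alpha$ only alters the degree-zero coefficient $p_0(x)$, so the shifted polynomial retains the required form. Your additional remark about absolute irreducibility not being needed for the converse direction of Proposition~\ref{prp equation of sor} is a point the paper leaves implicit.
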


\begin{rem}\rm
  Let $\av{X}$ be given by the polynomial $x^2+y^2-1$, i.e., it is a cylinder of revolution with the axis $z$. Then the real part of $\av{X}_{-1}$ is only the coordinate $z$-axis and for each $\alpha<-1$ it is the empty set. Nonetheless, it is still possible to consider complex surfaces of revolution to overcome this limitation. However this is not needed in the presented paper.
\end{rem}

\begin{proof}[Proof of Theorem~\ref{thm sor are onions}]
  First we prove the theorem for a special case when the axis of SOR $\av{X'}$ coincides with the coordinate $x$-axis. By Proposition~\ref{prp equation of sor}, we know that $f(x,y,z)=\sum_i p_i(x)(y^2+z^2)^i$. Then $\av{X'}_\alpha$ is defined by the polynomial
  \begin{equation}
    f(x,y,z)-\alpha=  (p_0(x)-\alpha)+\sum_{i\geq 1}p_i(y^2+z^2)^i,
  \end{equation}
  which is obviously again the equation of some SOR with the same axis.

  To complete the proof suppose that $\av{X}$ is SOR in generic position and $\phi:\euR{3}\rightarrow\euR{3}$ is an isometry mapping axis of $\av{X}$ to the $x$-axis, i.e., $\phi(\av{X})=\av{X'}$. Since each $\phi$ can be written as $\vek[x\mapsto A\cdot x+b]$  for $\vek[A]\in\mathrm{O}_3(\cis[R])$ and $\vek[b]\in\cis[R]^3$, we obtain the~defining polynomial of $\av{X}'$ in the form $f(\vek[A]^{-1}\cdot(\vek[x-b]))$. So the surface $\phi(\av{X}_\alpha)$ admits the~equation
\begin{equation}
   f((\vek[A]^{-1}\cdot(\vek[x-b]))=\alpha
\end{equation}
and thus $\phi(\av{X}_\alpha) = \av{X}'_\alpha$ is  SOR by the arguments from the beginning of this proof.
\end{proof}

As a significant  practical contribution of the previous theorem we do not need to calculate the points $\{\vek[p]_i\}$ on $\av{X}$ (cf. Algorithm~\ref{alg recognition I}) but it is sufficient to choose them anywhere in $\euR{3}$ and then find a straight line $\av{A}$ intersecting all the normals $N_{\vek[p]_i}\av{X}_{f(\vek[p]_i)}$, see Fig~\ref{fig onion}. The~best tools for such kind of computations offers the line geometry, for the introduction to this branch of geometry see e.g. \cite{PoWa01}.  Recall that to each line $\av{L}$ in $\euR{3}$ determined by a point $\vek[p]$ and a direction vector $\vek[v]$ we may associate a homogeneous six-tuple
\begin{equation}
  L=(l_1:l_2:\cdots:l_6)=(\vek[l]:\overline{\vek[l]})=(\vek[v]:\vek[p\times v])\in\prR{5},
\end{equation}
the so called  \emph{Pl\"{u}cker coordinates}. With a bilinear form
\begin{equation}
  \langle X,Y\rangle =\langle (\vek[x:\overline{x}), (y:\overline{y})\rangle=x\cdot\overline{y}+\overline{x}\cdot y],
\end{equation}
where `$\cdot$' is a standard Euclidean inner product, we have that ({\it i}) $X$ represents a line if and only if $\langle X,X\rangle = 0$, and ({\it ii}) lines $X$ and $Y$ intersect (or they are parallel) if and only if $\langle X,Y\rangle = 0$.

Now, let us consider the Pl\"{u}cker coordinates of the normal $N_{\vek[p]_i}\av{X}_{f(\vek[p]_i)}$
\begin{equation}
  N_i=(\vek[n]_i:\overline{\vek[n]}_i)=(\nabla\,f(\vek[p]_i):\vek[p]_i\times\nabla\,f(\vek[p]_i)).
\end{equation}
If $A=(\vek[a]:\overline{\vek[a]})$ are the Pl\"{u}cker coordinates of the (sought) axis of SOR then the geometric condition that $\av{A}$ intersects all the normals reads as
\begin{equation}\label{eq_soustava}
    \vek[n]_i\cdot\overline{\vek[a]}+\overline{\vek[n]}_i\cdot\vek[a]=0,
\end{equation}
which is a system of homogeneous linear equations in six variables. Hence, it is enough to consider only five linearly independent normals $N_i$ to compute $A$. This brings us to an improved version of Algorithm~\ref{alg recognition I} without drawbacks discussed before -- see Algorithm~\ref{alg recognition II}.

\begin{algorithm}[h]
\caption{Recognition of SOR II}\label{alg recognition II} \algsetup{indent=2em}
\begin{algorithmic}[1]
 \REQUIRE $\av{X}:\ f(x,y,z)$
 \ENSURE \av{X} is SOR with the axis {\av{A}}/\av{X} is not SOR.

 \STATE $N:=\{\}$; $i:=1$
 \WHILE{$i<5$}
   \STATE Choose a random point $\vek[p]_i$
   \STATE $N_i:= (\nabla\,f(\vek[p]_i):\vek[p]_i\times\nabla\,f(\vek[p]_i))$
   \IF{$\{N_1,\dots,N_i\}$ are linearly independent}
     \STATE $N:=N\cup N_i$; $i:=i+1$
   \ENDIF
 \ENDWHILE
 \STATE Solve the system $\vek[n]_j\cdot\overline{\vek[a]}+\overline{\vek[n]}_j\cdot\vek[a]=0$ for $j=1,\dots,5$.
 \IF{$(\vek[a],\overline{\vek[a]})$ determines a straight line}
   \STATE $\av{A}$ is the line with the coordinates $(\vek[a]:\overline{\vek[a]})$
   \STATE Find the isometry $\phi$ s.t.  $\phi(\av{A})= \langle x\rangle$.
   \STATE $\av{X}' = \phi(\av{X})$
   \IF{$\av{X}'$ is SOR with the $x$-axis}
     \RETURN \av{X} is SOR with the axis {\av{A}}.
   \ELSE
     \RETURN \av{X} is not SOR.
   \ENDIF
 \ELSE
   \RETURN \av{X} is not SOR.
 \ENDIF
\end{algorithmic}
\end{algorithm}

Moreover if $f$ is a~polynomial from $\cis[K][x,y,z]$ then taking points with their coordinates from $\cis[K]^3$ leads to a system of linear equations with the coefficients in $\cis[K]$. Thus, the following corollary easily follows:

\begin{cor}\label{cor k-rational axis}
   Let the surface of revolution be given by the equation with the coefficients in a~field $\cis[K]$. Then its axis admits a~parameterization with the coefficients also in $\cis[K]$.
\end{cor}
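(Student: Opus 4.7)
The plan is to verify that every ingredient of Algorithm~\ref{alg recognition II} stays in the field $\cis[K]$ when the input polynomial $f$ lies in $\cis[K][x,y,z]$. Since $\cis[K]\subseteq\cis[R]$ is infinite, we are free to choose the auxiliary points $\vek[p]_1,\dots,\vek[p]_5\in\cis[K]^3$. For such a point, both the gradient $\nabla f(\vek[p]_i)$ and the cross product $\vek[p]_i\times\nabla f(\vek[p]_i)$ have $\cis[K]$-rational entries, so the Pl\"{u}cker coordinates $N_i=(\nabla f(\vek[p]_i):\vek[p]_i\times\nabla f(\vek[p]_i))$ are vectors in $\cis[K]^6$. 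Consequently, the homogeneous linear system \eqref{eq_soustava} has all its coefficients in $\cis[K]$.

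By Theorem~\ref{thm sor are onions} every member of the family $\Sigma_f$ shares the axis $\av{A}$, hence the Pl\"{u}cker vector of $\av{A}$ belongs to the kernel of this system regardless of where the $\vek[p]_i$ are chosen. For a Zariski-generic choice of five $\cis[K]$-rational points the normals $N_1,\dots,N_5$ are linearly independent; indeed the locus where they fail to be so is a proper Zariski-closed subset of $(\af{3})^5$, and its complement meets $(\cis[K]^3)^5$ because $\cis[K]$ is infinite. Under this genericity the kernel is one-dimensional, so its unique projective generator lies in $\prG{5}{\cis[K]}$, and we may take $(\vek[a]:\overline{\vek[a]})$ with $\vek[a],\overline{\vek[a]}\in\cis[K]^3$.

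Finally, from the Pl\"{u}cker data of $\av{A}$ we read off an actual parameterization $\vek[q]+t\vek[a]$ by taking $\vek[a]$ as direction vector and $\vek[q]=(\vek[a]\times\overline{\vek[a]})/(\vek[a]\cdot\vek[a])$ as a point on the line (well-defined since $\vek[a]\neq 0$); both entries sit in $\cis[K]^3$. The only delicate point in the argument is the genericity claim, and degenerate configurations with a higher-dimensional kernel (e.g.\ a sphere, where every diameter is an axis) are not an obstacle: the kernel itself is still defined over $\cis[K]$, so $\cis[K]$-rational axes exist, and the extraction of the affine parameterization from Pl\"{u}cker coordinates never leaves $\cis[K]$.
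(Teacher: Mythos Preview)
Your proof is correct and follows the same line as the paper, which treats the corollary as an immediate consequence of the sentence preceding it: choosing the sample points in $\cis[K]^3$ makes the linear system \eqref{eq_soustava} have coefficients in $\cis[K]$, so its solution space is defined over $\cis[K]$. You simply supply the details the paper elides---the Zariski genericity argument for the existence of five $\cis[K]$-rational points with independent normals, the explicit recovery of an affine point $\vek[q]=(\vek[a]\times\overline{\vek[a]})/(\vek[a]\cdot\vek[a])$ from the Pl\"{u}cker data, and the treatment of the higher-symmetry case---none of which changes the underlying approach.
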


\section{Rationality of surfaces of revolution}\label{sec rationality}\label{sec rational}

In the previous section we presented a simple and efficient method for recognition of implicitly given surfaces of revolution~ $\av{X}$. In the affirmative case we also obtained the profile curve $\av{P}: p(x,y)=0$ and the axis $\av{A}$.  However, in many (especially technical) applications it is usually more convenient to work with parametric representations of surfaces instead of with implicit ones. In this section we will focus on this problem and discuss the question of rationality of SORs. In what follows, we assume that the profile curve is a~real curve and thus $\av{X}$ is also real, i.e, it is a~two-dimensional subset of $\euR{3}$.


For a real algebraic surface $\av{X}\subset\afR{3}$, let $\av{X}_{\cis}$ be the surface defined by the same polynomial but considered in the complex space $\af{3}$.  Then it may be easily seen that for a~rational SOR the circles of latitude (characteristic circles) form a rational pencil on $\av{X}_{\cis}$ and hence this surface  is by N\"{o}ther theorem (see \cite{No70}) birationally equivalent to a~tubular surface. We recall that \emph{tubular surfaces} are shapes described by the equation
\begin{equation}
  A(x)y^2+B(x)z^2+C(x)=0.
\end{equation}
It is proved in \cite[Theorem 3]{Sc98b} that any real tubular surface is unirational. Hence there exists a rational parameterization if and only if $\av{X}$ is a~real surface and $\av{X}_{\cis}$ is rational. The rationality of complex surfaces may be tested via computing two of its birational invariants -- in particular by Castelnuovo's theorem it holds, $\av{X}_{\cis}$ is rational if and only if $P_2=p_a=0$, see e.g. Section~\ref{Intro}. Nevertheless the computation of these invariants is exceedingly complicated, in general. Hence we will use the fact that the surface is SOR and prove a criterion based on the rationality of a certain curve easily derived from the profile curve of the surface, which will be significantly a simpler problem.

Recall that, contrary to the curve case, the unirationality of a real surface does not imply its rationality. Nevertheless by Comesatti theorem (see \cite{Com12}) $\av{X}$ is rational if and only if it is unirational and connected (note that the number of components has to be computed in the projective extension and after resolving singularities). Since the number of components and the construction of rational parameterizations of tubular surfaces was thoroughly studied by Schicho, see e.g. \cite{Sc98b,Sch00}, it is sufficient to provide a~criterion of rationality of $\av{X}_{\cis}$ and to present explicitly a birational mapping from $\av{X}$ to a tubular surface.

\medskip
From now on, we will assume without loss of generality that the axis of SOR $\av{X}$ is the coordinate $x$-axis.
Trivially, if $(\phi(t),\psi(t))$ is a rational parameterization of the profile curve or one of its component then
\begin{equation}
  \vek[x](s,t)=\left(\phi(t),\frac{2s}{1+s^2}\psi(t),\frac{1-s^2}{1+s^2}\psi(t)\right)
\end{equation}
obviously parameterizes $\av{X}$.  However, this parameterization is not necessarily proper. Hence the rationality of $\av{P}$ implies the rationality of $\av{X}_{\cis}$ and at least the unirationality of $\av{X}$.  We emphasize that the converse statement is not true, i.e., there exists a rational SOR $\av{X}_{\cis}$ with the non-rational profile curve.

\begin{exmp}\rm
  The surface  $\av{X}:\ y^2+z^2=x^3+3x^2-2x$ is a cubic SOR. Its profile curve given by $ y^2=x^3+3x^2-2x$ is a non-singular cubic and hence it is an~elliptic curve. Nevertheless $\av{X}_{\cis}$ is a~rational surface as it is easy to verify that the curve $\av{G}$ parameterized by
\begin{equation}
  \left(\frac{t \left(\sqrt{2}t^2-1\right)}{\left(t^2+1\right)^2},-\frac{\left(1+\sqrt{2}\right)t^2}  {\left(t^2+1\right)^2},\frac{2t^4+3t^2+1}{(t^2+1)^2}\right)
\end{equation}
is a rational curve lying on the surface with the non-constant $x$-coordinate. Thus rotating this curve along the $x$-axis yields a rational parameterization of $\av{X}$, see Fig.~\ref{fig section}. One can also see in this figure  that the real part of $\av{G}$ does not intersect all of the real characteristic circles on $\av{X}$ and thus the obtained parameterization will not cover the whole surface but only one of its components. Moreover, the parameterized component $\av{G}$ intersects almost all the characteristic circles in two distinct points and thus after rotating the curve and generating SOR the obtained parameterization is non-proper. Indeed, $\av{X}$ is an example of real surfaces which are unirational but not rational because it consists of two connected components, see Fig~\ref{fig section}.

\begin{figure}[t]
\begin{center}
  \includegraphics[width=0.4\textwidth]{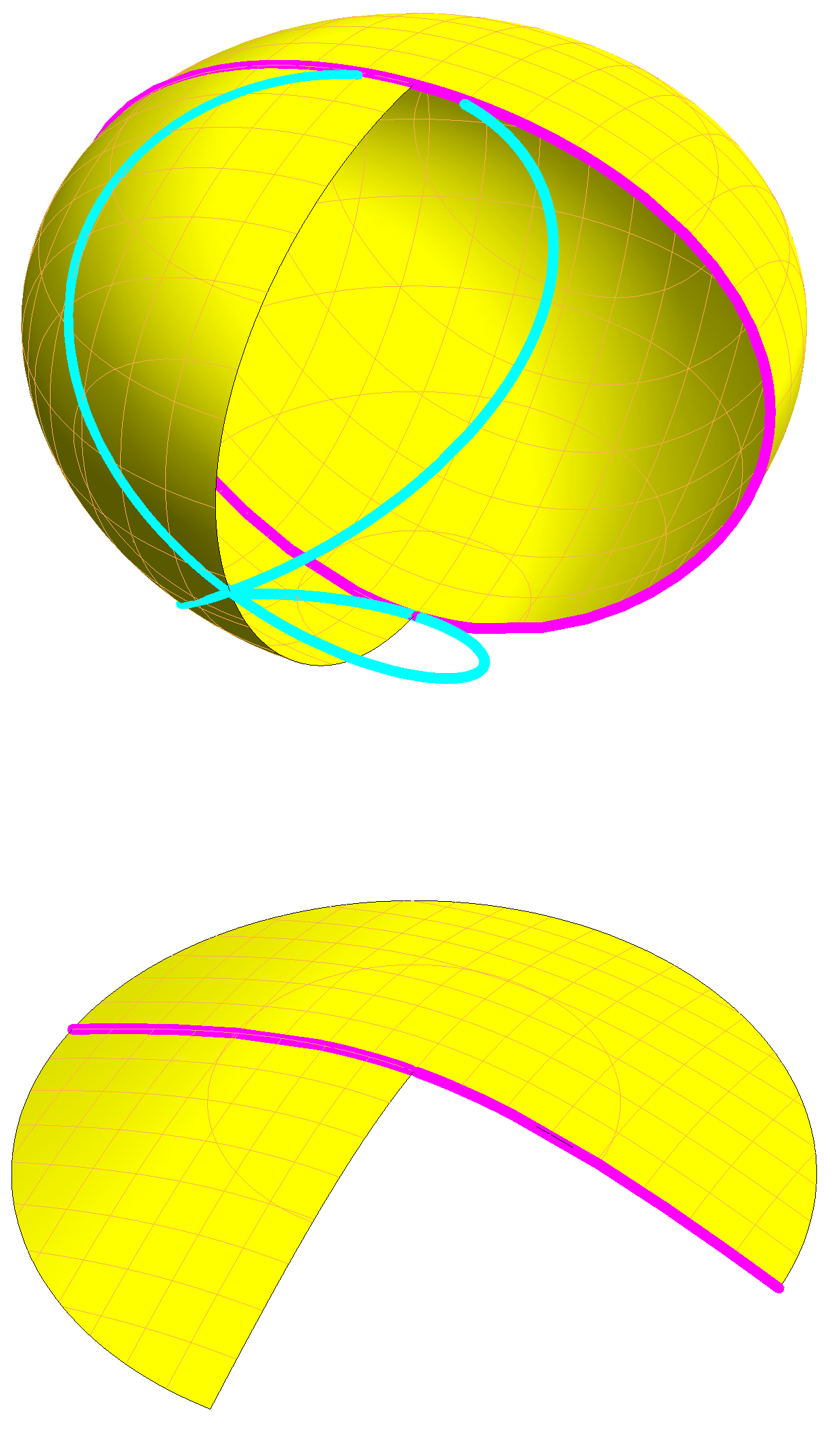}

\begin{minipage}{0.9\textwidth}
\caption{An example of unirational real SOR $\av{X}$ (yellow) consisting of the two components and with the elliptic, i.e., non-rational, profile curve $\av{P}$ (magenta) and a rational generatrix curve $\av{G}$ (cyan). The generatrix $\av{G}$ does not intersect all of the real circles of latitude on~$\av{X}$ and moreover it intersects the circles typically in two distinct points.
\label{fig section}}
\end{minipage}
\end{center}
\end{figure}
\end{exmp}

The curve used in the previous example represents objects which are fundamental for parameterizing surfaces of revolution. A rational curve on $\av{X}_{\cis}$ which intersects all the circles of latitude is called a \emph{section} of SOR. Having a parameterization $(\phi(s),\psi(s),\mu(s))$ of a~section of SOR one obtains a rational parameterization of $\av{X}_{\cis}$ simply by rotating the section along the axis, i.e., we arrive at
\begin{equation}
  \vek[x](s,t):=\left(\phi(t),\frac{2s}{1+s^2}\psi(t)-\frac{1-s^2}{1+s^2}\mu(t),\frac{2s}{1+s^2}\mu(t)+\frac{1-s^2}{1+s^2}\psi(t)\right)
\end{equation}
Conversely, a rational parameterization of SOR allows to generate a parameterization of the section in a straightforward way. Hence the rationality of $\av{X}_{\cis}$ (and thus the unirationality of $\av{X}$) is equivalent to the~existence of a~section.

Consider the morphism $\af{2}\rightarrow\af{2}$ given by $(x,y)\mapsto (x,y^2)$ and denote by $\av{P}^2$ the image of the profile curve under this morphism. According to \eqref{eq profile} the curve $\av{P}^2$ has the equation
\begin{equation}
  \sum_i p_i(x)y^i=0
\end{equation}
and the following theorem holds.

\begin{thm}\label{thm rationality of sor}
  Let $\av{X}$ be SOR as above. Then $\av{P}^2$ is irreducible curve and it is rational if and only if $\av{X}_{\cis}$ is a rational surface.
\end{thm}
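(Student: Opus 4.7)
My plan is to exploit the rational morphism
\[
  \rho\colon\av{X}_{\cis}\dashrightarrow\av{P}^2,\qquad (x,y,z)\longmapsto(x,\,y^2+z^2),
\]
which realizes $\av{X}_{\cis}$ as a conic bundle over $\av{P}^2$. The first step is to establish the irreducibility of $\av{P}^2$. Let $q(x,y)=\sum_i p_i(x)\,y^i$ denote its defining polynomial. By Proposition~\ref{prp equation of sor} the equation of $\av{X}$ is $f(x,y,z)=q(x,y^2+z^2)$, so any nontrivial factorization $q=q_1\,q_2$ in $\cis[x,y]$ would lift to a nontrivial factorization $f=q_1(x,y^2+z^2)\cdot q_2(x,y^2+z^2)$ in $\cis[x,y,z]$, contradicting the absolute irreducibility of $f$.

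For the direction ``$\av{X}_{\cis}$ rational $\Rightarrow$ $\av{P}^2$ rational'' I will invoke the section characterization recalled just before the theorem: a rational $\av{X}_{\cis}$ admits a section $\sigma(t)=(\phi(t),\psi(t),\mu(t))$, with $\phi$ necessarily non-constant because $\sigma$ meets all circles of latitude. Pushing down by $\rho$ yields the rational map $t\mapsto(\phi(t),\,\psi(t)^2+\mu(t)^2)$, which is a non-constant rational parameterization of the irreducible curve $\av{P}^2$, so $\av{P}^2$ is rational.

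For the converse, given a rational parameterization $(\phi(t),\chi(t))$ of $\av{P}^2$, I will use the complex factorization $y^2+z^2=(y+iz)(y-iz)$ to solve $y^2+z^2=\chi(t)$ without introducing square roots. Setting
\[
  \sigma(t)=\left(\phi(t),\,\tfrac{\chi(t)+1}{2},\,\tfrac{\chi(t)-1}{2i}\right),
\]
a direct check gives $\left(\tfrac{\chi+1}{2}\right)^{2}+\left(\tfrac{\chi-1}{2i}\right)^{2}=\chi$, so $\sigma$ lies on $\av{X}_{\cis}$. The excluded configuration in which the generatrix is a line perpendicular to the axis corresponds precisely to $\av{P}^2$ being of the form $\{x=c\}$, and is ruled out by assumption; hence $\phi$ is non-constant and $\sigma$ is a section in the sense of the paper. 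Rotating $\sigma$ by the formula displayed just before the theorem produces a rational parameterization of $\av{X}_{\cis}$, showing that $\av{X}_{\cis}$ is unirational, hence rational by Castelnuovo's theorem recalled in Section~\ref{Intro}.

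The essential obstacle is the square-root-free lift of $\chi(t)$ to a section of $\av{X}_{\cis}$, which depends crucially on the factorization $y^2+z^2=(y+iz)(y-iz)$ over $\cis$; this is precisely why the criterion concerns $\av{X}_{\cis}$ rather than the real surface $\av{X}$. The remaining work is bookkeeping, with the only subtlety being the verification in the converse direction that the forbidden ``perpendicular generatrix'' configuration corresponds exactly to $\av{P}^2$ being a vertical line, so that $\phi$ is automatically non-constant.
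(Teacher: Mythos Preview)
Your proof is correct and follows essentially the same approach as the paper: both directions go through the section characterization, and for the converse you use the identical complex sum-of-two-squares trick $\chi=\left(\tfrac{\chi+1}{2}\right)^{2}+\left(\tfrac{\chi-1}{2i}\right)^{2}$ to lift a parameterization of $\av{P}^2$ to a section of $\av{X}_{\cis}$. The one noteworthy difference is your irreducibility argument---pulling back a hypothetical factorization of $q$ along $y\mapsto y^2+z^2$ to contradict the absolute irreducibility of $f$---which is cleaner than the paper's version; the paper instead argues via the image of the morphism $(x,y)\mapsto(x,y^2)$, splitting into cases according to whether $\av{P}$ is irreducible or decomposes as $\av{P}^+\cup\av{P}^-$.
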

\begin{proof}
If $\av{P}$ is irreducible then $\av{P}^2$ is the~image of the irreducible curve under the morphism and thus it is irreducible. If $\av{P=P^+\cup P^-}$ and $(x_0,y_0)$ is a point on $\av{P}^2$ with $y_0\not=0$ then the preimage consists of the two points $(x_0,\pm\sqrt{y_0})$; one on $\av{P^+}$ and one on $\av{P}^-$. Thus the morphism glues the two components together and $\av{P}^2$ is irreducible.

Next, let $\av{X}_{\cis}$ be rational, i.e., there exists a rational section parameterized by $(\mu(s),\phi(s),\psi(s))$. Then $(\mu(s),\phi^2(s)+\psi^2(s))$ parameterizes $\av{P}^2_{\cis}$. Since $\av{P}$ is a~real curve by assumption, so is $\av{P}^2$ and thus it is rational by L\"{u}roth theorem.

Finally from the~parameterization $(\phi(s),\psi(s))$ of $\av{P}^2$ it is possible to obtain a parameterization of a~section just by writing $\psi(s)$ as a sum of two squares, e.g.
\begin{equation}\label{eq complex sos}
  \left(\phi(s),\frac{1}{2}(\psi(s)+1),\frac{1}{2\sqrt{-1}}(\psi(s)-1)\right).
\end{equation}
\end{proof}

\begin{cor}
  If $\av{X}$ is SOR with the reducible profile curve then $\av{X}_{\cis}$ is rational if and only if $\av{P}^\pm$ are rational.
\end{cor}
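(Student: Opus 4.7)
The plan is to deduce the Corollary directly from Theorem~\ref{thm rationality of sor}: since that theorem already identifies rationality of $\av{X}_{\cis}$ with rationality of $\av{P}^2$, it suffices to show that, under the assumption of reducibility, $\av{P}^2$ is birationally equivalent to each component $\av{P}^\pm$.

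First, I would revisit the construction used in the proof of Theorem~\ref{thm rationality of sor}. The morphism $\pi\colon(x,y)\mapsto(x,y^2)$ is generically $2$-to-$1$, identifying $(x,y)$ with $(x,-y)$, and by the discussion preceding Proposition~\ref{prp equation of sor} the components $\av{P}^+$ and $\av{P}^-$ are exchanged by the reflection $(x,y)\mapsto(x,-y)$. Consequently, the two preimages $(x_0,\pm y_0)$ of a generic point $(x_0,y_0^2)\in\av{P}^2$ lie one on $\av{P}^+$ and one on $\av{P}^-$. Hence the restriction $\pi|_{\av{P}^+}\colon\av{P}^+\to\av{P}^2$ is generically one-to-one, i.e., birational, and the same holds for $\pi|_{\av{P}^-}$.

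Second, since rationality is a birational invariant of curves, this gives the equivalence that $\av{P}^+$ is rational if and only if $\av{P}^2$ is rational. Moreover, the reflection $(x,y)\mapsto(x,-y)$ is a biregular automorphism of $\af{2}$ mapping $\av{P}^+$ isomorphically onto $\av{P}^-$, so $\av{P}^+$ is rational if and only if $\av{P}^-$ is rational. Chaining these equivalences with Theorem~\ref{thm rationality of sor} yields the Corollary.

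The only subtle point is the birationality of $\pi|_{\av{P}^\pm}$: it would fail if some component of $\av{P}$ were contained in the branch locus $\{y=0\}$ of $\pi$, but this is excluded by the absolute irreducibility assumption on $f$, since a factor $y^2$ in $p(x,y)$ would induce a degenerate factor $y^2+z^2$ of $f(x,y,z)$.
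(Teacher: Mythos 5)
Your proposal is correct and follows essentially the same route as the paper: both arguments observe that $(x,y)\mapsto(x,y^2)$ restricts to a birational morphism $\av{P}^+\to\av{P}^2$ (the two preimages of a generic point being split between $\av{P}^+$ and $\av{P}^-$) and then invoke Theorem~\ref{thm rationality of sor} together with the birational invariance of rationality. Your extra remarks on the reflection symmetry and on excluding a component in $\{y=0\}$ are sound but only make explicit what the paper leaves implicit.
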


\begin{proof}
 If $\av{P=P^+\cup P^-}$ then $(x,y)\mapsto(x,y^2)$ defines a birational morphism $\av{P}^+\rightarrow\av{P}^2$. Hence by Theorem~\ref{thm rationality of sor}
$\av{X}_{\cis}$ is rational if and only if $\av{P}^+$ is (and so is $\av{P}^-$).
\end{proof}

The expression of parameterization \eqref{eq complex sos} of section in the proof of Theorem~\ref{thm rationality of sor} is based on the fact that any rational function over the complex field can be written as a sum of two squares. The decomposition over $\cis[R]$ (or its subfield $\cis[K]$) is more delicate, see eg. \citep{LaSchWiHi00,LaSchWi01}. We are not going to repeat these results here as well as we do not present a method to (properly) parameterize $\av{X}$. Our goal was to derive a criterion of unirationality of $\av{X}$. Instead of parameterizing the surface directly we just provide explicitly a birational mapping from a certain tubular surface to the given SOR. The methods of proper parameterizations of tubular surfaces can be found in \citep{Sc98b,Sch00}.

\begin{thm}\label{thm birational to tubular}
  Let $(p(t)/q(t),r(t)/q(t))$ be a~proper parameterization of $\av{P}^2$ and let $\hat{r}$ and $\hat{q}$ be a square-free parts of $r$ and $q$ such that $\gcd(\hat{r},\hat{q})=1$. Then $\av{X}$ is birationally equivalent to the tubular   surface
  \begin{equation}\label{eq tubular surface}
     \av{T}: y^2+z^2-\hat{r}(x)\hat{q}(x)=0.
  \end{equation}
Moreover the mapping $\tau:\av{T\dashrightarrow X}$ can be given explicitly, see \eqref{eq tau}.
\end{thm}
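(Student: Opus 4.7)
The plan is to construct $\tau$ explicitly by using the tubular defining relation $y^{2}+z^{2}=\hat{r}(x)\hat{q}(x)$ to rescale the fiber coordinates in a way that matches the parameterization of $\av{P}^{2}$. The key algebraic observation I would exploit is that, reading the hypothesis so that $r=\hat{r}\,u^{2}$ and $q=\hat{q}\,v^{2}$ for some $u,v\in\cis[R][x]$, one obtains the rational identity
\begin{equation*}
\frac{r(x)}{q(x)} \;=\; \hat{r}(x)\hat{q}(x)\cdot\left(\frac{u(x)}{v(x)\hat{q}(x)}\right)^{\!2},
\end{equation*}
so that $r/q$ and $\hat{r}\hat{q}$ differ only by a perfect square in $\cis[R](x)$. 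The coprimality $\gcd(\hat{r},\hat{q})=1$ guarantees that the resulting scaling factor has no spurious cancellations.

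Guided by this, I would set
\begin{equation*}
\tau(x,y,z) \;=\; \left(\;\frac{p(x)}{q(x)},\; \frac{u(x)\,y}{v(x)\hat{q}(x)},\; \frac{u(x)\,z}{v(x)\hat{q}(x)}\;\right),
\end{equation*}
and verify $\tau(\av{T})\subset\av{X}$ by a single substitution: the tubular relation converts $Y^{2}+Z^{2}$ into $(y^{2}+z^{2})\,u^{2}/(v^{2}\hat{q}^{2})=\hat{r}\hat{q}\cdot u^{2}/(v^{2}\hat{q}^{2})=r(x)/q(x)$, so the image pair $(X,\,Y^{2}+Z^{2})=(p(x)/q(x),\,r(x)/q(x))$ lies on $\av{P}^{2}$ by construction of the parameterization, and hence $\tau(x,y,z)\in\av{X}$.

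Birationality then comes essentially for free from the properness of $(p/q,r/q)$: properness gives a rational function $t(X,W)\in\cis[R](X,W)$ inverting the parameterization, and plugging in $W=Y^{2}+Z^{2}$ one writes the rational inverse
\begin{equation*}
\tau^{-1}(X,Y,Z) \;=\; \left(\;t,\; \frac{v(t)\hat{q}(t)}{u(t)}\,Y,\; \frac{v(t)\hat{q}(t)}{u(t)}\,Z\;\right),\qquad t=t\bigl(X,Y^{2}+Z^{2}\bigr),
\end{equation*}
and both compositions $\tau\circ\tau^{-1}$ and $\tau^{-1}\circ\tau$ collapse to the identity using the defining relations of $\av{P}^{2}$, $\av{T}$, and $\av{X}$ together with properness.

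The main obstacle I anticipate is not the algebra but pinning down the correct reading of ``square-free part'' so that the decomposition $r=\hat{r}\,u^{2}$, $q=\hat{q}\,v^{2}$ is literally available in $\cis[R][x]$. Under the other common convention, where $\hat{r}$ is merely the radical of $r$, the factor $r/\hat{r}$ need not be a square and the simple diagonal rescaling of $(y,z)$ above fails; one would then have to represent $r/(q\hat{r}\hat{q})$ as a sum of two squares in $\cis[R](x)$ (which is available via Cassels--Pfister under a mild positivity assumption on the real locus) and replace the diagonal scaling by a genuine $\cis[R](x)$-rotation of $(y,z)$ whose coefficients encode that representation. Once the convention is settled, the proof reduces to the one algebraic identity displayed above plus a routine verification.
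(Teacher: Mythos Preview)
Your proposal is correct and follows essentially the same route as the paper: there too $\tau$ is built from the proper inverse of the $\av{P}^{2}$-parameterization in the first coordinate together with a diagonal rescaling of $(y,z)$, yielding $\tau(x,y,z)=\bigl(\tilde p(x)/\tilde q(x),\,d(x)y/\tilde q(x),\,d(x)z/\tilde q(x)\bigr)$ with $\tilde r\tilde q=\hat r\hat q\,d^{2}$, and one checks that $d/\tilde q$ coincides (up to sign) with your factor $u/(v\hat q)$. The only cosmetic difference is that the paper first passes through an intermediate tubular surface $\av{T}'\colon \tilde q(x)(y^{2}+z^{2})=\tilde r(x)$ (removing $\gcd(r,q)$ before taking the square-free decomposition, then citing Schicho for $\av{T}'\dashrightarrow\av{T}$), whereas you write the composite in one step; your caveat about the two readings of ``square-free part'' is well taken and is exactly the convention the paper's identity $\tilde r\tilde q=\hat r\hat q\,d^{2}$ implicitly requires.
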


\begin{proof}
  Since the parameterization of $\av{P}^2$ is assumed to be proper there exists its rational inverse $(x,y)\mapsto\varphi(x,y)$. Now consider the rational mapping
  \begin{equation}
    \tau_1:(x,y,z)\mapsto (\varphi(x,y^2+z^2),y,z).
  \end{equation}
  Let $\av{T}'$ denote the image of $\av{X}$ under $\tau_1$ then it is easily verified that it admits an equation
  \begin{equation}
   \frac{1}{\gcd(r(x),q(x))}(q(x)(y^2+z^2)-r(x))=\tilde{q}(x)(y^2+z^2)-\tilde{r}(x)=0.
  \end{equation}
  Moreover $\tau_1$ is birational as its inverse is given simply by $(x,y,z)\mapsto (\tilde{p}(x)/\tilde{q}(x),y,z)$. To construct $\av{T}'\dashrightarrow\av{T}$ we may proceed as in the proof of Lemma 2 in \citep{Sc98b}, i.e., if $\tilde{r}\cdot \tilde{q}=\hat{r}\cdot \hat{q}\cdot d^2$, then the birational mapping
  \begin{equation}
    \tau_2: (x,y,z)\mapsto \left(x,\frac{\tilde{q}(x)y}{d(x)},\frac{\tilde{q}(x)z}{d(x)}\right)
  \end{equation}
  maps $\av{T}'$ to $\av{T}$ given by \eqref{eq tubular surface}. Hence $\tau^{-1}=\tau_2\circ\tau_1:\av{X}\dashrightarrow\av{T}$ is a birational mapping with the desired inverse $\tau:\av{T\dashrightarrow X}$ given by
  \begin{equation}\label{eq tau}
    (x,y,z)\mapsto \left(\frac{\tilde{p}(x)}{\tilde{q}(x)},\frac{d(x)y}{\tilde{q}(x)},\frac{d(x)z}{\tilde{q}(x)}\right).
  \end{equation}
\end{proof}

\begin{figure}
\begin{center}
\hspace{2ex}
\includegraphics[width=0.4\textwidth]{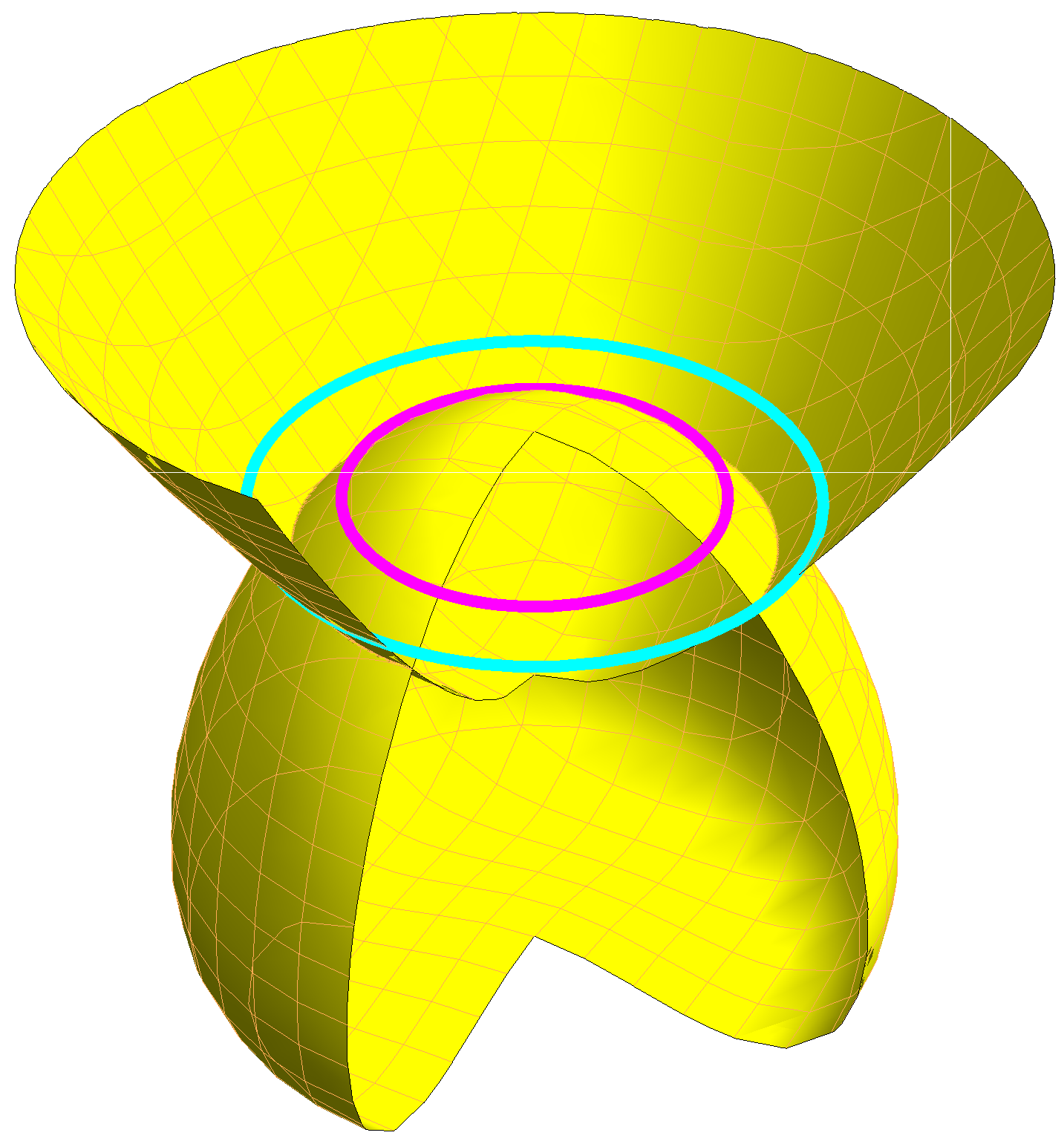}
\hfill
\raisebox{15ex}{\scalebox{1.2}{$\xrightarrow[\tau^{-1}]{\mathrm{tubularization}}$}}
\hfill
\raisebox{4ex}{\includegraphics[width=0.37\textwidth]{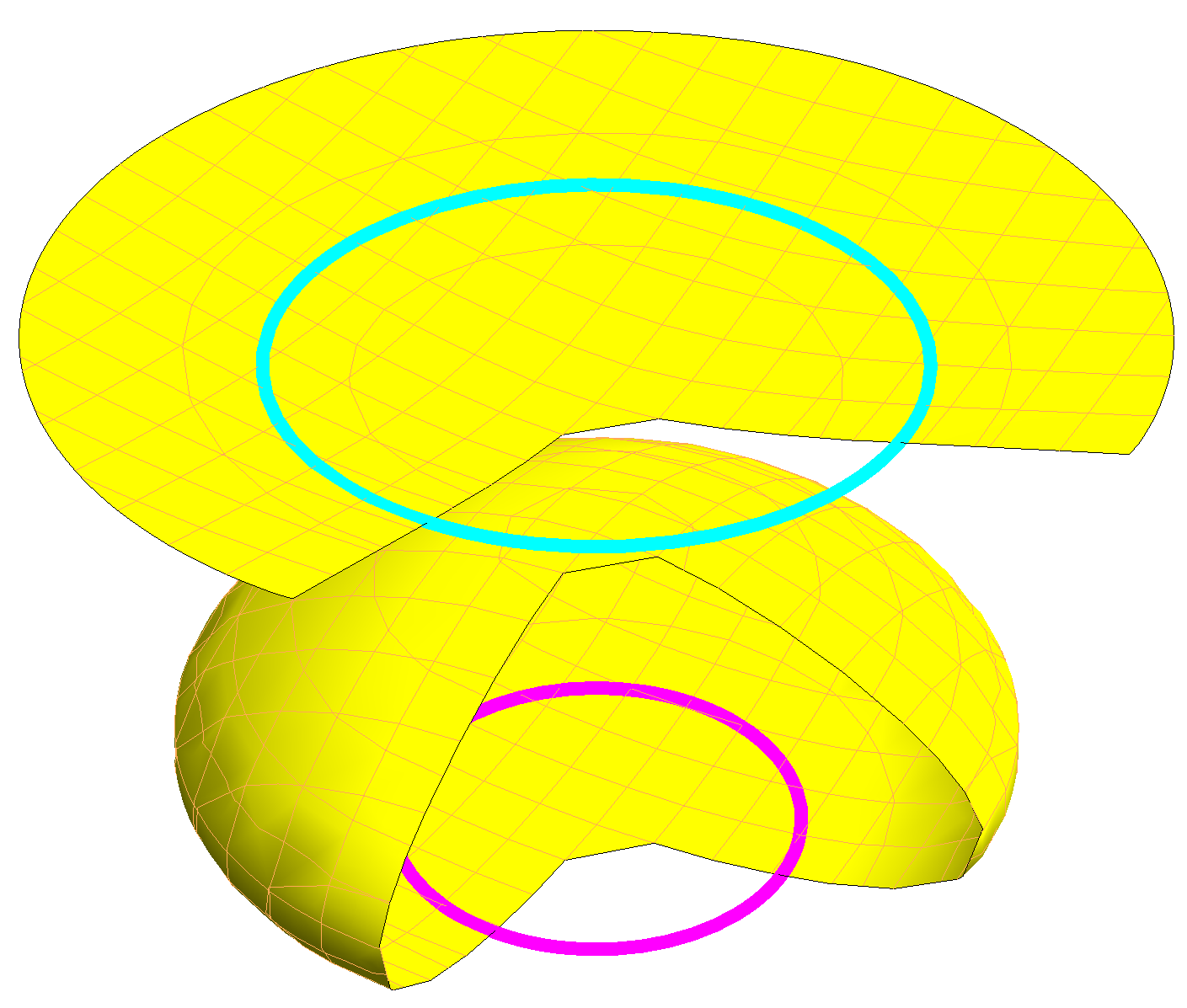}}
\hspace{2ex}
\begin{minipage}{0.9\textwidth}
\caption{Although a general SOR can have more circles with the same $x$ coordinate (i.e., co-centric circles of latitude), the corresponding tubular surface has only one circle with a given latitude. So the mapping $\tau^{-1}$ provides a separation of the co-centric circles.}
\label{fig TUB}
\end{minipage}
\end{center}
\end{figure}

\begin{cor}
  A unirational surface of revolution  $\av{X}$ is rational if and only if $\hat{r}(x)\hat{q}(x)$ has at most two real roots.
\end{cor}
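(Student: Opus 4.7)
The plan is to reduce the corollary to a statement about the tubular surface $\av{T}$ from Theorem~\ref{thm birational to tubular} and then invoke the Comesatti criterion together with Schicho's component count for real tubular surfaces.

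First I would observe that, since $\av{X}$ is unirational and birationally equivalent to $\av{T}: y^2+z^2 = \hat{r}(x)\hat{q}(x)$, the surfaces $\av{X}_{\cis}$ and $\av{T}_{\cis}$ are simultaneously rational; moreover, a birational real map preserves the number of real components after resolution of singularities. Hence, by Comesatti's theorem recalled in Section~\ref{sec rationality}, $\av{X}$ is rational if and only if the real locus of $\av{T}$ is connected (as a desingularised real projective surface).

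Next I would analyse the topology of the real part of $\av{T}$. Fixing $x_0\in\cis[R]$, the fibre over $x_0$ is a real circle when $h(x_0):=\hat{r}(x_0)\hat{q}(x_0)>0$, a single point when $h(x_0)=0$, and empty when $h(x_0)<0$. Since $\gcd(\hat{r},\hat{q})=1$ and both factors are square-free, $h$ is square-free, so its zeros are simple sign-changes. Working on $\mathbb{P}^1(\cis[R])$ (a topological circle), the real zeros of $h$, together with the point at infinity treated according to the parity of $\deg h$ and the sign of its leading coefficient, partition $\mathbb{P}^1(\cis[R])$ into arcs of constant sign of $h$; each arc with $h\ge 0$ contributes exactly one connected component to the smooth model of $\av{T}_{\cis[R]}$ (the tube over the arc, closed off into a topological sphere or torus at the endpoints by blowing up the isolated nodal points where $h=0$). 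A direct case analysis on the number $k$ of real roots of $h$ shows that this construction yields a single component precisely when $k\le 2$, and at least two components when $k\ge 3$; I would present this as a short combinatorial lemma on arcs of $\mathbb{P}^1(\cis[R])$. The detailed count is already established in \cite{Sc98b,Sch00}, so the cleanest route is to invoke it directly rather than reprove it.

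The main obstacle is the careful bookkeeping in step~2: making sure that the passage from the affine real picture (where unbounded tubes and bounded tubes behave differently) to the smooth projective real model reproduces the ``$\le 2$ real roots'' threshold honestly in every subcase (even/odd degree of $h$, sign of leading coefficient, root at infinity). Once this is handled, the two directions of the equivalence follow immediately: if $h$ has at most two real roots the tubular surface is connected, hence $\av{X}$ is rational by Comesatti; conversely, if $h$ has at least three real roots, $\av{T}$ splits into two or more real components, so $\av{X}$ is unirational but not rational.
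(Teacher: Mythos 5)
Your proposal is correct and follows essentially the same route as the paper: pass to the tubular surface $\av{T}$ of Theorem~\ref{thm birational to tubular}, invoke Comesatti's criterion, and count the real components of $\av{T}$ via the sign distribution of $\hat{r}\hat{q}$, citing \cite{Sc98b,Sch00} for the component count. Your extra care about the projective closure (treating the sign changes on $\mathbb{P}^1(\cis[R])$ so that the two unbounded tubes of a positive-leading-coefficient even-degree case are seen to join at infinity) is a welcome clarification of a point the paper leaves implicit in its appeal to \cite{Sch00}.
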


\begin{proof}
 Applying the results on the spine curve of the~tubular from \cite{Sch00}, one can see that the number of connected components of $\av{T}$ equals to the number of intervals where the polynomial $\hat{r}(x)\hat{q}(x)>0$. Hence $\av{T}$ is connected and thus rational if and only if the polynomial possesses at most two real roots.  Then the result follows immediately from Theorem~\ref{thm birational to tubular}.

\end{proof}

%

Finally, let us summarize all the obtained results on (uni)rationality of the surfaces of revolution reflecting the rationality of $\av{P}$ and $\av{P}^2$ to the following table, see Table~\ref{tab_SOR}.

\begin{table}[ht]
\centering
\begin{minipage}{0.9\textwidth}
\caption{Rationality and unirationality of SOR.}\label{tabulka}
\label{tab_SOR}
\end{minipage}\vspace{0.5ex}

        \begin{tabular}{|c|c|c|c|c} \hline
          $\mathrm{genus}(\av{P})$   & $\mathrm{genus}(\av{P}^2)$ & $\av{X}_{\cis}$ & $\av{X}$  \\ \hline\hline
          reducible          & $0$           & rational        & rational            \\ \hline
          reducible          & $>0$          & non-rational      & non-rational          \\ \hline
          $ 0$          & $0$           & rational        & depends on $\#\mathrm{components}$            \\ \hline
          $>0$          & $0$           & rational        & depends on $\#\mathrm{components}$     \\ \hline
          $>0$          & $>0$          & non-rational      & non-rational          \\  \hline
        \end{tabular}
\end{table}

\section{Conclusion}\label{Concl}

This paper was devoted to an interesting (and till now unsolved) theoretical problem, motivated by some technical applications, i.e., how to recognize an implicit surface of revolution from the defining polynomial equation of a given algebraic surface. We designed a symbolic algorithm (which avoids computing with float coefficients) returning for surfaces of revolution also their axis. In addition, we investigated the problem of rationality and unirationality of surfaces of revolution and presented how to solve this easily by discussing the rationality of a certain planar curve associated to the given rotational surface. The methods and approaches were presented on two examples in Appendix. The study can be considered as a first step towards the recognition of other implicitly given surfaces (e.g. canal surfaces, whose special instances surfaces of revolution are).

\section*{Acknowledgments}

The first author was supported by the project NEXLIZ, CZ.1.07/2.3.00/30.0038, which is co-financed by the European Social Fund and the state budget of the Czech Republic.
The work on this paper was supported by the European Regional Development Fund, project ``NTIS~--~New Technologies for the Information Society'', European Centre of Excellence, CZ.1.05/1.1.00/02.0090.

\bigskip
\begin{appendix}

\section{Computed examples}
The methods and approaches studied in this paper will be now presented in detail in the two following particular examples.

\begin{exmp}\rm
Let $\av{X}$ be an implicit surface given by the defining polynomial
\begin{equation}
\begin{array}{rcl}
   f(x,y,z) & = & 729 x^6-5832 x^5 y+12150 x^5+19440 x^4 y^2-40500 x^4 y+6075 x^4 z^2-70750 x^4 \\
   && -34560 x^3 y^3-32400 x^3 yz^2+444000 x^3 y+67500 x^3 z^2+120000 x^3+34560 x^2 y^4 \\
   && +144000 x^2 y^3+64800 x^2 y^2 z^2-781750 x^2y^2-45000 x^2 y z^2-1555000 x^2 y \\
   && +16875 x^2 z^4-325000 x^2 z^2-189375 x^2-18432 x y^5-192000 x y^4-57600 xy^3 z^2\\
   && -240000 x y^2 z^2+2152500 x y^2-45000 x y z^4+1200000 x y z^2-1995000 x y+93750 xz^4\\
   && +156000 x y^3+675000 x z^2-3168750 x+4096 y^6+76800 y^5+19200 y^4 z^2+232375 y^4\\
   && +240000 y^3 z^2-390000 y^3+30000 y^2z^4+106250 y^2 z^2-388750 y^2+187500 y z^4\\
   && -1525000 y z^2+3287500 y+15625 z^6-406250 z^4+2265625 z^2-3562500.
\end{array}
\end{equation}
We choose randomly in $\euR{3}$ five points $\vek[p]_1,\ldots,\vek[p]_5$ (determining five associated surfaces $\av{X}_{f(\vek[p]_i)}$ from the family $\Sigma_f$), for instance
\begin{equation}
\{(-2, 1, 0), (0, 1, 0), (-2, -2, 1), (1, 1, -2), (-2, -2, 2)\}.
\end{equation}
Then we find the corresponding normals $N_{\vek[p]_1}\av{X}_{f(\vek[p]_1)}, \ldots,N_{\vek[p]_5}\av{X}_{f(\vek[p]_5)}$ and compute their Pl\"ucker coordinates
\begin{equation}
\begin{array}{rcl}
 N_1&=& (5114:-4452:0:0:0:3790), \\
 N_2&=& (-3065682:2678076:0:0:0:3065682), \\
 N_3&=& (1161776:9625632:11672400:-32970432:24506576:-16927712), \\
 N_4&=& (-797368:5955324:-8737800:3172848:10332536:6752692), \\
 N_5&=& (122126:1249332:3087300:-8673264:6418852:-2254412).
\end{array}
\end{equation}
It can be shown that these normals are linearly independent and thus the system of linear equations \eqref{eq_soustava} has only one homogeneous solution
\begin{equation}\label{eq exmp axis}
  A=(4: 3: 0:0: 0: -5),
\end{equation}
describing a unique line $\av{A}$ which can be parameterized as
\begin{equation}\label{eq exmp axis2}
  (3/5, -4/5, 0) +t(4, 3, 0).
\end{equation}

Now, we use an isometry $\phi$ which maps the axis $\av{A}$ to the coordinate $x$-axis and obtain the transformed surface $\av{X}'=\phi(\av{X})$ described by the polynomial
\begin{equation}
\begin{array}{rcl}
 \hat{f}(x,y,z) & = & -400 - 104 x^2 - x^4 + 200 y^2 - 48 x y^2 + 26 x^2 y^2 - 29 y^4 + 12 x y^4 + y^6 + 200 z^2 \\
 &&  - 48 x z^2 + 26 x^2 z^2 - 58 y^2 z^2 +  24 x y^2 z^2 + 3 y^4 z^2 - 29 z^4 + 12 x z^4 + 3 y^2 z^4 + z^6.
\end{array}
\end{equation}
The section with the coordinate plane $z=0$ has the equation
\begin{equation}\label{eq exmp profile}
 p(x,y)=-400 - 104 x^2 - x^4 + 200 y^2 - 48 x y^2 + 26 x^2 y^2 - 29 y^4 +
 12 x y^4 + y^6 = 0.
\end{equation}
It is seen that $p(x,y)$ contains $y$ in even powers only, and thus it is a profile curve of SOR $p(x,y^2+z^2)=0$. Finally, it can be easily verified that
$\hat{f}(x,y,z)=p(x,y^2+z^2)$. This brings us to the result that $\av{X}$ is SOR with axis~\eqref{eq exmp axis2}.
\end{exmp}

\bigskip
\begin{exmp}\rm
In the previous example, we have shown that $\av{X'}$ is SOR (with the axis of rotation being the coordinate $x$-axis). The genus of its profile curve $\av{P}$ given by the defining polynomial \eqref{eq exmp profile} is one and thus it is a non-rational curve. 

Hence to test the unirationality of $\av{X'}$ we have to use Theorem~\ref{thm rationality of sor} (see also Table~\ref{tabulka}).
Unlike $\av{P}$, the curve
\begin{equation}
\av{P}^2:400 - 104 x^2 - x^4 + 200 y - 48 x y + 26 x^2 y - 29 y^2 +  12 x y^2 + y^3=0
\end{equation}
is rational and it is parameterizable e.g. as
\begin{equation}
  \vek[p](t)=(-t^3+t,5 + 4 t + 6 t^2 + 4 t^3 + t^4).
\end{equation}
Now using the notation from Theorem~\ref{thm birational to tubular} we have
\begin{equation}
\tilde{p}(t)=-t^3+t,\quad \tilde{r}(t)=t^4+4t^3+6t^2+4t+5,\quad \tilde{q}(t)=1\quad \mbox{and}\quad d(t)=1.
\end{equation}
So $\av{X'}$ is birational to the tubular surface \eqref{eq tubular surface}
\begin{equation}
  \av{T}: y^2+z^2- t^4-4t^3-6t^2-4t-5=0
\end{equation}
via the mapping~\eqref{eq tau} $\tau:\av{T\dashrightarrow X'}$
\begin{equation}
  (x,y,z)\mapsto (-x^3+x,y,z).
\end{equation}
Since $\av{T}$ can be parameterized as
\begin{equation}
  \left(t,
  \frac{2 s \left(t^2+2 t-1\right)+\left(s^2-1\right) (2
   t+2)}{s^2+1},
   \frac{\left(1-s^2\right) \left(t^2+2 t-1\right)+2 s (2 t+2)}{s^2+1}\right),
\end{equation}
we arrive at a parameterization of $\av{X'}$ in the form
\begin{equation}
  \left(-t^3+t,\frac{2 s \left(t^2+2 t-1\right)+\left(s^2-1\right) (2
   t+2)}{s^2+1},
   \frac{\left(1-s^2\right) \left(t^2+2 t-1\right)+2 s (2 t+2)}{s^2+1}
  \right).
\end{equation}
Finally the transformation $\phi^{-1}$ (see the previous example) leads to a rational parameterization  of $\av{X}=\phi^{-1}(\av{X}')$.

\end{exmp}
\end{appendix}

\bibliographystyle{ieeetr}
\bibliography{bibliography}{}

\end{document}